\documentclass[aps,twocolumn,superscriptaddress,longbibliography,nofootinbib]{revtex4-2}

\usepackage{amsmath,amssymb}
\usepackage{amsthm}
\usepackage{graphicx}
\usepackage{bm}
\usepackage[colorlinks=true,linkcolor=blue,citecolor=blue,urlcolor=blue]{hyperref}
\usepackage{orcidlink}

\providecommand{\ket}[1]{|#1\rangle}
\providecommand{\bra}[1]{\langle#1|}

\newtheorem{theorem}{Theorem}
\newtheorem{proposition}{Proposition}

\newtheorem{corollary}{Corollary}
\theoremstyle{remark}

\newcommand{\Ncal}{\mathcal{N}}

\begin{document}

\title{Phase information beyond entanglement sudden death in
coherence-to-entanglement conversion under post-gate noise}

\author{Asad Ali\orcidlink{0000-0001-9243-417X}}
\email{asal68826@hbku.edu.qa}
\affiliation{Qatar Center for Quantum Computing, College of Science and Engineering, Hamad Bin Khalifa University, Doha, Qatar}
\author{Hashir~Kuniyil\orcidlink{0000-0003-0338-1278}}
\affiliation{Qatar Center for Quantum Computing, College of Science and Engineering, Hamad Bin Khalifa University, Doha, Qatar}
\author{M.T Rahim\orcidlink{0000-0003-1529-928X}}
\affiliation{Qatar Center for Quantum Computing, College of Science and Engineering, Hamad Bin Khalifa University, Doha, Qatar}
\author{Saif Al-Kuwari\orcidlink{0000-0002-4402-7710}}
\affiliation{Qatar Center for Quantum Computing, College of Science and Engineering, Hamad Bin Khalifa University, Doha, Qatar}

\date{\today}

\begin{abstract}
An ideal CNOT maps the phase of a coherent qubit onto the coherence between
$\ket{00}$ and $\ket{11}$ of a two-qubit state, producing an output that
carries both entanglement and estimable phase information. We ask how
post-gate noise degrades these two quantities, and find that they are not
lost together. For the phase-encoded X states generated by the protocol, the
negativity is a thresholded difference of the surviving coherence $z=f\kappa$
and a population penalty $g$, vanishing once $f\kappa\le g$, while the phase
quantum Fisher information (QFI) is the smooth ratio $F_\phi=4z^2/(a+b)$,
which stays positive for any nonzero coherence. As a result there is an exact
region of state space in which the output is separable but still
phase-sensitive. We characterize this region, give the residual QFI
$F_\phi^\star=4g_\star^2/(1-2g_\star)$ at entanglement death, and show that
channels reaching death at the same coordinate share this residual, with
global and independent local depolarization forming one such class and
$F_\phi^\star=1/6$ at maximal input coherence. Four standard channels appear
as trajectories through this common geometry, and asymmetric population
transfer adds a third coordinate that changes the entanglement but leaves the
QFI unchanged, which marks where the two-coordinate description applies. We
identify a measurement that attains the bound and compare with a direct
single-qubit probe, which is more precise under matched exposure; the results
are therefore reference benchmarks for phase-information retention, not a
claim of metrological advantage.
\end{abstract}

\maketitle

\section{Introduction}
\label{sec:intro}

A single-qubit phase, written as the phase of an off-diagonal coherence
$\cos\theta\ket0+e^{i\phi}\sin\theta\ket1$, can be mapped by a CNOT gate onto
the coherence between $\ket{00}$ and $\ket{11}$ of a two-qubit state,
converting a local superposition into a bipartite resource
\cite{Baumgratz2014,StreltsovRMP,Streltsov2015,Ma2016,Killoran2016,Zhu2018}.
The output then carries two operationally distinct quantities: the
entanglement generated by the conversion, and the phase itself, whose
estimation precision is set by the quantum Fisher information (QFI) through
the quantum Cram\'er--Rao bound \cite{Helstrom,Holevo,BraunsteinCaves,
Paris2009,TothApellaniz,LiuReview}. We take the CNOT to be ideal and the
noise to act after the gate, and ask whether such noise degrades the two
quantities together, that is, whether the loss of entanglement coincides
with the loss of phase information.

It does not. The two quantities are functionals of the same noisy state but
enter through different operations, and those operations reach zero at
different points. This is the result of the paper, developed below in exact
form.

Four established lines of research frame the question. The conversion of coherence
into entanglement by incoherent operations is foundational, including the
quantitative correspondence between input coherence and output entanglement
\cite{Streltsov2015,Ma2016,Killoran2016,Zhu2018}. The QFI is tied to
coherence and asymmetry, and single-qubit phase estimation under noise is
standard \cite{Paris2009,Zhong2013,ChapeauBlondeau2015}. General methods for
the QFI of finite-dimensional mixed states are well established
\cite{Paris2009,LiuReview}, so evaluating the QFI of a $2\times2$ block is
not itself an advance. And entanglement decay under standard channels,
including entanglement sudden death (ESD), is thoroughly studied
\cite{Peres,VidalWerner,Companion}. Here we provide a
\emph{same-state} analytical comparison: the phase QFI and the entanglement
evaluated along identical channel trajectories of one conversion protocol,
with the relation between them stated exactly rather than plotted.

The conceptual content is a contrast between two functional forms. Write the
surviving phase-bearing coherence as $z=f\kappa$, where
$\kappa=\tfrac12\sin2\theta$ fixes the input and $f$ is the channel's
coherence-suppression factor, and write $g$ for the population the noise
deposits in the single-excitation subspace. Then, for the symmetric-leakage
family generated by the protocol,
\begin{equation}
\Ncal=\max\{0,\;z-g\},
\qquad
F_{\phi}=\frac{4z^{2}}{1-2g}.
\label{eq:headline}
\end{equation}
Negativity is a \emph{thresholded difference}: it vanishes at finite noise
whenever the population penalty $g$ catches up to the coherence $z$. The QFI
is a \emph{smooth ratio}: it is positive for any $z>0$ and merely
renormalized by the block weight $1-2g$. Between the point where negativity
vanishes and the point where coherence vanishes lies a region of state space
in which the output is \emph{separable but still phase-sensitive}. That
region is the main object of study.

The consequences we establish exactly are: the phase QFI is finite at and
beyond ESD (Sec.~\ref{sec:separation}); the residual sensitivity at the
death point is a function of the death coordinate alone, so
channels reaching death at the same coordinate share their residual QFI, and
the two depolarizing channels form such a class with $F_{\phi}^{\star}=
\tfrac16$ (Sec.~\ref{sec:death}); asymmetric population transfer adds a third
coordinate that changes the entanglement but not the QFI, marking the exact
boundary of the two-coordinate description (Sec.~\ref{sec:asym}); a specified
measurement attains the bound, and against a direct single-qubit probe the
converter offers no unconditional precision advantage, so the results are
information-retention benchmarks, not a metrological enhancement
(Sec.~\ref{sec:operational}); and a during-gate Lindblad model is bounded
from below by the post-gate idealization for the cases tested
(Sec.~\ref{sec:during}).

\emph{Relation to prior work of the authors.} A companion benchmark
\cite{Companion} determined how the generated \emph{entanglement} decays
under these channels, deriving the negativity coordinates $(f,g)$ and the
survival fingerprints. The present work asks a different, operational
question about the \emph{same} outputs: how much information about the
encoded phase remains, including after the state becomes separable. We cite
the negativity geometry rather than rederiving it, and the new content is the
phase QFI, the resource-separation region, the death-point sensitivity and
its equivalence classes, and the asymmetric-leakage boundary. All analytic
expressions were checked against direct density-matrix computation at machine
precision (Appendix~\ref{app:numerics}).

\section{Protocol and general phase-encoded state}
\label{sec:setup}

The protocol prepares qubit $A$ with coherence between $\ket0$ and $\ket1$,
qubit $B$ in $\ket0$, and applies an ideal CNOT. For any single-qubit input
encoding its unknown parameter as the phase of the off-diagonal element,
\begin{equation}
\rho_{A}(\phi)=
\begin{pmatrix} q & r\,e^{-i\phi}\\ r\,e^{i\phi} & 1-q\end{pmatrix},
\qquad 0\leq r\leq\sqrt{q(1-q)},
\label{eq:input}
\end{equation}
the CNOT with the incoherent ancilla produces the phase-encoded X state
\begin{equation}
\rho_{X}(\phi)=
\begin{pmatrix}
a & 0 & 0 & z\,e^{-i\phi}\\
0 & u & 0 & 0\\
0 & 0 & v & 0\\
z\,e^{i\phi} & 0 & 0 & b
\end{pmatrix},
\label{eq:generalX}
\end{equation}
initially with $a=q$, $b=1-q$, $u=v=0$, $z=r$, and after noise with general
$(a,b,u,v,z)$ obeying $a+b+u+v=1$ and $z^{2}\leq ab$. The phase enters only
the antidiagonal element, so the family is unitarily generated,
$\rho_{X}(\phi)=e^{-i\phi G}\rho_{X}(0)e^{i\phi G}$ with
$G=\ket{11}\bra{11}$, and the task is single-parameter estimation of $\phi$
with the fixed bounded generator $G$, whence $F_{\phi}\leq1$
\cite{TothApellaniz}. Numerical illustrations use the transmon-like scales
\cite{Krantz} $T_{1}=100~\mu$s, $T_{\varphi}=80~\mu$s, and CNOT error
$\varepsilon_{\rm gate}=0.5\%$ of the companion work \cite{Companion}, so
the figures are comparable across the series.

\section{Two resource functionals of the same state}
\label{sec:functionals}

The paper turns on the contrast between two functionals of the state
\eqref{eq:generalX}, so we state both generally before specializing.

\begin{proposition}[Block QFI]
\label{prop:block}
For the family \eqref{eq:generalX} with even-block weight $s\equiv a+b>0$,
\begin{equation}
F_{\phi}(\rho_{X})=\frac{4z^{2}}{a+b},
\label{eq:blockqfi}
\end{equation}
independently of the odd-sector populations $u$ and $v$.
\end{proposition}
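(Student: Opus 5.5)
The plan is to exploit the direct-sum structure of $\rho_X(\phi)$. Reorder the computational basis as $\{\ket{00},\ket{11}\}\oplus\{\ket{01},\ket{10}\}$. Then
\[
\rho_X(\phi)=\rho_{\rm even}(\phi)\oplus\mathrm{diag}(u,v),
\qquad
\rho_{\rm even}(\phi)=\begin{pmatrix} a & z e^{-i\phi}\\ z e^{i\phi} & b\end{pmatrix}.
\]
The generator $G=\ket{11}\bra{11}$ is also block diagonal: it acts as $\mathrm{diag}(0,1)$ on the even block and as zero on the odd block.

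\textbf{Step 1: discard the odd block.} I will use the standard spectral form of the QFI for a unitary family,
\[
F_\phi=2\sum_{\lambda_i+\lambda_j>0}\frac{(\lambda_i-\lambda_j)^2}{\lambda_i+\lambda_j}\,|\bra{i}G\ket{j}|^2 .
\]
The eigenvectors of $\rho_X(0)$ split into even-block and odd-block eigenvectors. $G$ has no matrix elements between the two blocks, and it vanishes on the odd block. Hence only pairs with both eigenvectors in the even block contribute, and $u,v$ drop out entirely. This already gives the stated independence from $u$ and $v$.

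\textbf{Step 2: reduce to a normalized qubit.} Write $\rho_{\rm even}=s\,\sigma$ with $s=a+b$ and $\sigma$ a unit-trace qubit state. The eigenvalues scale by $s$, and each term $(\lambda_i-\lambda_j)^2/(\lambda_i+\lambda_j)$ is homogeneous of degree one. Therefore $F_\phi=s\,F_\phi(\sigma)$, where $\sigma$ is generated by $\mathrm{diag}(0,1)=(\mathbb{1}-\sigma_z)/2$.

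\textbf{Step 3: evaluate the qubit QFI.} For a qubit with Bloch vector $\vec r$ rotated about $z$, the QFI under generator $\sigma_z/2$ is $r_\perp^2$, the squared transverse Bloch component. This follows from the spectral formula with eigenvalues $(1\pm|\vec r|)/2$ and $|\bra{+}\sigma_z\ket{-}|^2=r_\perp^2/|\vec r|^2$; it also holds for mixed states. Here $r_\perp=2|z|/s$, so
\[
F_\phi=s\cdot\frac{4z^2}{s^2}=\frac{4z^2}{a+b}.
\]

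\textbf{Main obstacle.} The delicate point is Step 3 when $\sigma$ is pure ($z^2=ab$) or has a zero eigenvalue, and when $z=0$, where degenerate eigenvalues occur. The restricted sum over $\lambda_i+\lambda_j>0$ handles rank deficiency, because in the pure case only cross terms with one nonzero eigenvalue survive, and these still give $r_\perp^2=1$. The case $z=0$ gives zero trivially. As a cross-check on these boundary cases, I would verify the result via fidelity: for $2\times2$ blocks, $F=2(1-\sqrt{F(\rho_\phi,\rho_{\phi+d\phi})})$ expanded to second order, or equivalently via the closed qubit formula $\mathrm{Tr}[(\partial\rho)^2]+\mathrm{Tr}[(\partial\rho)^2]\cdot\ldots$, which recovers the same expression.
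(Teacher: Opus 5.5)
Your proposal is correct and follows essentially the same route as the paper's proof. Both use the spectral QFI formula for the unitary family, note that $G$ has no matrix elements touching the odd sector so $u,v$ drop out, and evaluate the even block via its Bloch vector. The differences are minor: you normalize by $s$ and use degree-one homogeneity, whereas the paper works directly with the eigenvalues $\tfrac12(s\pm\Delta)$, and you handle rank deficiency through the restricted sum rather than by continuity.

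Two slips do not affect the argument. In the pure case the surviving cross terms give $r_\perp^{2}=4ab/(a+b)^{2}$, which equals $1$ only when $a=b$. Your fidelity cross-check as written is the squared Bures distance, which must be multiplied by $4/d\phi^{2}$ to give $F_\phi$.
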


The proof (Appendix~\ref{app:block}) is the standard spectral QFI evaluation
\cite{Paris2009,LiuReview}: $G$ annihilates $\ket{01},\ket{10}$, so only the
even-parity block contributes. The evaluation is routine qubit metrology; the
structural fact we use repeatedly is that $F_{\phi}$ depends on the state
\emph{only} through the pair $(z,s)$, and not through the imbalance $u-v$.

\begin{proposition}[Negativity]
\label{prop:neg}
For the same state, the partial transpose has a single candidate negative
eigenvalue, giving
\begin{equation}
\Ncal(\rho_{X})=\max\Bigl\{0,\;
\tfrac12\bigl[\sqrt{(u-v)^{2}+4z^{2}}-(u+v)\bigr]\Bigr\},
\label{eq:generalN}
\end{equation}
and the concurrence \cite{Wootters} satisfies $C=2\Ncal$ on this family.
\end{proposition}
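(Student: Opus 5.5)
The plan is to compute both quantities in closed form from the block structure of \eqref{eq:generalX} and then compare them. The phase $\phi$ plays no role, because it can be removed by the local unitary $\mathrm{diag}(1,e^{i\phi})$ on qubit $B$, which leaves the spectrum of the partial transpose and the concurrence unchanged.

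\emph{Negativity.} I will take the partial transpose with respect to $B$. This moves the antidiagonal coherence $z e^{\mp i\phi}$ from the $\{\ket{00},\ket{11}\}$ entries to the $\{\ket{01},\ket{10}\}$ entries. The transposed matrix then splits into two blocks. The even block is diagonal, with entries $a,b\ge0$, so it contributes no negative eigenvalue. The odd block is
\begin{equation*}
\begin{pmatrix} u & z e^{-i\phi}\\ z e^{i\phi} & v\end{pmatrix},
\end{equation*}
with eigenvalues $\tfrac12\bigl[u+v\pm\sqrt{(u-v)^2+4z^2}\bigr]$. The $+$ eigenvalue is nonnegative, so the $-$ eigenvalue is the single candidate negative eigenvalue. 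With the convention that $\Ncal$ is the sum of the absolute values of the negative eigenvalues, this gives \eqref{eq:generalN} directly. I will also record the threshold: the candidate eigenvalue is negative iff $(u-v)^2+4z^2>(u+v)^2$, that is, iff $z^2>uv$.

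\emph{Concurrence.} I will invoke Wootters' formula \cite{Wootters} for X states. For a state with outer coherence $z$ (populations $a,b$) and inner coherence $w$ (populations $u,v$), it reads $C=2\max\{0,\,|z|-\sqrt{uv},\,|w|-\sqrt{ab}\}$. This follows because $\rho\tilde\rho$ is block diagonal with $2\times2$ blocks, so the square roots of its eigenvalues are $\sqrt{ab}\pm|z|$ and $\sqrt{uv}\pm|w|$. Here $w=0$, so the last term is $-\sqrt{ab}\le0$ and can be dropped. Hence $C=2\max\{0,\,z-\sqrt{uv}\}$, which vanishes exactly when $z^2\le uv$. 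The entanglement threshold therefore coincides with that of $\Ncal$ for every $(a,b,u,v,z)$.

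\emph{Identity $C=2\Ncal$.} This is the step that needs care, because $2\Ncal=\sqrt{(u-v)^2+4z^2}-(u+v)$ and $C=2z-2\sqrt{uv}$ agree on the entangled region only if $u=v$. The claim must therefore be read on the protocol family of \eqref{eq:headline}: the symmetric-leakage outputs with $u=v=g$. On that family, $2\Ncal=2z-2g$, which equals $C=2(z-g)$, and both vanish together for $z\le g$. This reproduces $\Ncal=\max\{0,z-g\}$ in \eqref{eq:headline}.

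To make the scope explicit, I will add one remark for $u\neq v$ on the entangled region. The inequality $\sqrt{(u-v)^2+4z^2}<2z+(u+v)-2\sqrt{uv}$ reduces, after squaring, to $(\sqrt u-\sqrt v)^2>0$, so $2\Ncal<C$ strictly there. The identity is thus exact precisely on the symmetric family $u=v$ (or trivially when both quantities are zero). Only the positivity thresholds coincide in general. This anticipates the asymmetric-leakage discussion of Sec.~\ref{sec:asym}.
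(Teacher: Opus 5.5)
Your proposal is correct and follows the route the paper relies on. The negativity is the lower eigenvalue of the odd block of the partial transpose; the paper does not derive this but quotes it from the companion benchmark. The paper's own justification of $C=2\Ncal$, given in the proof of Theorem~\ref{thm:coordinate}, is your X-state concurrence $C=2\max\{0,z-\sqrt{uv}\}$ combined with $\sqrt{uv}=g$.

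Your scope restriction is also right and worth keeping. As literally stated for general $(a,b,u,v,z)$, the claim ``$C=2\Ncal$ on this family'' is false when $u\neq v$: for instance, $v=0<u$ with $z>0$ gives $C=2z>\sqrt{u^{2}+4z^{2}}-u=2\Ncal$. In general only the vanishing threshold $z^{2}\le uv$ is shared, and the identity itself holds only on the symmetric-leakage subfamily $u=v$, which is the only place the paper uses it.
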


This is the negativity of the companion benchmark \cite{Companion}, quoted
for contrast. The two functionals differ in one respect that underlies all
of the results below: \emph{the QFI is blind to the odd-sector split $u-v$,
whereas the negativity depends on it.} The QFI reads $(z,\,u+v)$ through a
smooth ratio; the negativity reads $(z,\,u,\,v)$ through a thresholded
difference.

\section{Symmetric-leakage geometry and resource separation}
\label{sec:separation}

The channels generated by the protocol share a structural feature: they
deposit \emph{equal} populations in the single-excitation states, $u=v\equiv
g$. We first record what this buys, then prove the central result.

\begin{theorem}[Common-coordinate condition]
\label{thm:coordinate}
If a channel maps the protocol output to \eqref{eq:generalX} with $u=v=g$,
then trace preservation fixes $s=a+b=1-2g$, and both resources are specified
by the pair $(z,g)$:
\begin{equation}
F_{\phi}=\frac{4z^{2}}{1-2g},
\qquad
\Ncal=\max\{0,\,z-g\},
\qquad
C=2\Ncal.
\label{eq:masters}
\end{equation}
\end{theorem}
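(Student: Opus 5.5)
The plan is to specialize the two general functionals already established, Proposition~\ref{prop:block} and Proposition~\ref{prop:neg}, to the symmetric-leakage slice $u=v=g$. There are three routine steps and one point that needs a little care.

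\emph{Trace preservation.} The channel is trace preserving and maps the protocol output to the form \eqref{eq:generalX}. The image therefore still satisfies $a+b+u+v=1$. Setting $u=v=g$ gives $s=a+b=1-2g$. I will note that $s>0$ is needed for Proposition~\ref{prop:block} to apply. Positivity gives $z^2\le ab$, so $s>0$ holds whenever $z>0$. If $s=0$, then $z=0$ and both resources vanish, which I will treat as a degenerate endpoint where the QFI formula is read as $0$, or simply exclude by taking $g<\tfrac12$.

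\emph{QFI.} Substituting $s=1-2g$ into \eqref{eq:blockqfi} gives $F_\phi=4z^2/(1-2g)$ immediately. The content here is that Proposition~\ref{prop:block} shows $F_\phi$ depends only on $(z,s)$, and $s$ is now fixed by $g$. Hence the individual values of $a$ and $b$ drop out and the QFI is a function of $(z,g)$ alone.

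\emph{Negativity.} In \eqref{eq:generalN}, put $u-v=0$ and $u+v=2g$. This gives $\tfrac12[\sqrt{4z^2}-2g]=|z|-g$. The convention in \eqref{eq:generalX} takes $z\ge0$, since any phase is absorbed into $e^{\pm i\phi}$, so this equals $z-g$. The result is $\Ncal=\max\{0,z-g\}$. The relation $C=2\Ncal$ carries over directly from Proposition~\ref{prop:neg}. As a cross-check, I would compare with Wootters' X-state formula $C=2\max\{0,|z|-\sqrt{uv}\}=2\max\{0,z-g\}$.

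\emph{Main subtlety.} The step needing care is not algebraic. I must make sure the hypothesis really delivers the X form with a real nonnegative $z$ and the same phase $\phi$. This is what allows Propositions~\ref{prop:block} and \ref{prop:neg} to be invoked verbatim, in particular the unitary-family structure with generator $G$ required for the block QFI. I will state explicitly that this is part of the hypothesis ("maps the protocol output to \eqref{eq:generalX}"). Everything else is direct substitution, and I will conclude by remarking that $a$ and $b$ separately do not enter either resource, so $(z,g)$ is a complete coordinate pair on this family.
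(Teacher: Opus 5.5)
Your proposal is correct and is the paper's argument: trace preservation gives $s=a+b=1-2g$, and substituting $u=v=g$ into Propositions~\ref{prop:block} and~\ref{prop:neg} gives both $F_\phi=4z^2/(1-2g)$ and $\Ncal=\max\{0,z-g\}$. One adjustment: make your Wootters computation $C=2\max\{0,z-\sqrt{uv}\}$ with $\sqrt{uv}=g$ the actual justification of $C=2\Ncal$, as the paper does, rather than citing Proposition~\ref{prop:neg}, because for $u\neq v$ and $z>\sqrt{uv}$ one has $C>2\Ncal$ strictly, so that relation genuinely requires the symmetric slice.
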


\begin{proof}
$s=1-2g$ is $a+b+u+v=1$ with $u=v=g$. Setting $u=v=g$ in \eqref{eq:generalN}
collapses the square root to $2z$, giving $\Ncal=\max(0,z-g)$ and, via
$\sqrt{uv}=g$, $C=2\Ncal$. Equation \eqref{eq:blockqfi} with $s=1-2g$ gives
the QFI.
\end{proof}

The two resources are now functionals of the same two coordinates, read
through different operations: a smooth ratio for the QFI, a thresholded
difference for the negativity. Writing $z=f\kappa$ with $f$ the
coherence-suppression factor and $\kappa=\tfrac12\sin2\theta$ the input
coordinate, both resources are determined by the pair $(f\kappa,g)$. The
next theorem states the central consequence.

\begin{theorem}[Resource separation]
\label{thm:separation}
For symmetric leakage with $0\leq g<\tfrac12$, the output is separable
($\Ncal=0$) if and only if $f\kappa\leq g$, while it is phase-sensitive
($F_{\phi}>0$) if and only if $f\kappa>0$. Consequently the state space
partitions into three regions,
\begin{equation}
\underbrace{f\kappa>g}_{\text{entangled, sensitive}}\;\;
\underbrace{0<f\kappa\leq g}_{\text{separable, sensitive}}\;\;
\underbrace{f\kappa=0}_{\text{insensitive}},
\end{equation}
and the middle region, separable yet phase-sensitive, is nonempty for
every $g>0$.
\end{theorem}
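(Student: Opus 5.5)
The plan is to read both equivalences directly off Theorem~\ref{thm:coordinate} with the substitution $z=f\kappa$, and then obtain the partition and its nonemptiness from elementary set manipulations. Throughout I take $f\ge0$ and $\kappa\ge0$. This is the physical convention: $f$ is a suppression factor and $\kappa=\tfrac12\sin2\theta$ with $\theta\in[0,\pi/2]$. It also matches $z=r\ge0$ in \eqref{eq:generalX}. Hence $z=f\kappa\ge0$.

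For the separability statement, Theorem~\ref{thm:coordinate} gives $\Ncal=\max\{0,f\kappa-g\}$. This vanishes exactly when $f\kappa-g\le0$, i.e.\ when $f\kappa\le g$. I will add a brief remark that on this two-qubit X-state family $\Ncal=0$ is equivalent to a positive partial transpose. By the Peres--Horodecki criterion for $2\times2$ systems, this is equivalent to separability, so identifying ``separable'' with $\Ncal=0$ is justified rather than definitional.

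For the sensitivity statement, Theorem~\ref{thm:coordinate} gives $F_\phi=4(f\kappa)^2/(1-2g)$. The hypothesis $g<\tfrac12$ makes the denominator $s=1-2g$ strictly positive, which is exactly the condition $s>0$ in Proposition~\ref{prop:block}. Hence $F_\phi\ge0$, with equality iff $f\kappa=0$. Since $f\kappa\ge0$, we get $F_\phi>0\iff f\kappa>0$.

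The partition follows by splitting the half-line $f\kappa\ge0$ at $0$ and at $g$ into $\{f\kappa>g\}$, $\{0<f\kappa\le g\}$ and $\{f\kappa=0\}$. These sets are disjoint and exhaustive, and the two equivalences label each one.

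The main obstacle is nonemptiness of the middle region for every $g>0$, because I must exhibit a physically realizable state there. It is not enough to pick numbers $(z,g)$: the constraint $z^2\le ab$ must also hold. My first attempt is the explicit choice $a=b=\tfrac12-g$, $u=v=g$ and $z=\min\{g,\tfrac12-g\}\cdot\tfrac12$. Then $0<z\le g$ and $z\le \tfrac12-g=\sqrt{ab}$, so the matrix \eqref{eq:generalX} is positive semidefinite with unit trace. Since the statement is phrased in coordinates $(f\kappa,g)$, I will note that any such state is realized as $z=f\kappa$, for example with $\kappa=\tfrac12$ and $f=2z$. If realizability by a specific channel is demanded, I would instead appeal to continuity along a channel trajectory such as global depolarization. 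There $f\kappa$ and $g$ vary continuously and $f\kappa$ reaches $0$ only at full noise, so by the intermediate value theorem the trajectory passes through $0<f\kappa\le g$ with $g>0$.
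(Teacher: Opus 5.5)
Your proposal is correct and follows the paper's proof, which reads both equivalences directly off $\Ncal=\max\{0,f\kappa-g\}$ and $F_\phi=4(f\kappa)^2/(1-2g)$ from Theorem~\ref{thm:coordinate} and notes that the interval $0<f\kappa\le g$ has positive length. Your additions only make explicit what the paper leaves implicit: the two-qubit PPT identification of $\Ncal=0$ with separability, the sign convention $f\kappa\ge0$, and an explicit positive semidefinite state satisfying $z^2\le ab$. The optional depolarizing-trajectory fallback would only reach $g<\tfrac14$, so it could not give ``every $g>0$'', but your main construction does not need it.
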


\begin{proof}
Immediate from \eqref{eq:masters}: $\Ncal=\max(0,f\kappa-g)$ is zero exactly
when $f\kappa\leq g$, and $F_{\phi}=4(f\kappa)^{2}/(1-2g)$ is positive
exactly when $f\kappa>0$. For any $g>0$ the interval $0<f\kappa\leq g$ has
positive length, so the middle region is nonempty.
\end{proof}

Theorem~\ref{thm:separation} is the central result
[Fig.~\ref{fig:regions}]. It states an implication failure: $\Ncal=0$ does
\emph{not} imply $F_{\phi}=0$. Physically, entanglement requires the
phase-bearing coherence to exceed the population penalty it competes with in
the partial-transpose spectrum; phase sensitivity requires only that the
coherence be nonzero. Population noise can therefore push the output across
the separability boundary, extinguishing entanglement, without erasing the
coherence that carries $\phi$. In this family, entanglement death is not
phase-information death.

\begin{figure}[t]
\includegraphics[width=\columnwidth]{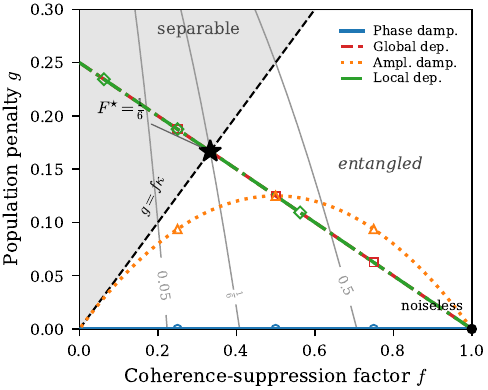}
\caption{Three-region geometry in the $(f,g)$ plane for a maximally
coherent input ($\theta=\pi/4$, $\kappa=\tfrac12$). Above the dashed line
$g=f\kappa$ the output is entangled and phase-sensitive; the shaded wedge
$0<f\kappa\leq g$ is \emph{separable but phase-sensitive}
(Theorem~\ref{thm:separation}); at $f\kappa=0$ the state is phase-insensitive.
Thin gray curves are iso-QFI contours $F_{\phi}=0.05$, $\tfrac16$, and $0.5$.
Colored curves are the four channel trajectories of Table~\ref{tab:fg},
starting at the noiseless point $(1,0)$, with open markers at noise values
$0.25$, $0.5$, and $0.75$. The star marks the common death point of global
and local depolarization on the boundary, with residual QFI
$F_{\phi}^{\star}=\tfrac16$ (Sec.~\ref{sec:death}); at this input, phase
damping and symmetric amplitude damping stay entangled until the coherence
itself vanishes.}
\label{fig:regions}
\end{figure}

\section{Channel trajectories}
\label{sec:channels}

The four post-gate channels of Ref.~\cite{Companion}, namely phase damping,
global depolarizing, amplitude damping, and independent local depolarizing
(Kraus sets in Appendix~\ref{app:channels}), all preserve the X form with
$u=v=g$, so each is a trajectory through the geometry of
Sec.~\ref{sec:separation}, entering via its pair $(f,g)$ (Table~\ref{tab:fg}).
We report the trajectories compactly; they are demonstrations of the common
geometry, not independent analyses, and the negativity coordinates are those
of the companion benchmark, now shown to govern the QFI as well.

\begin{table*}[t]
\caption{Channel coordinates: coherence-suppression factor $f$, population
penalty $g$, and block weight $s=1-2g$. The negativity reads $(f,g)$ through
the thresholded difference $f\kappa-g$; the phase QFI through the smooth
ratio $4f^{2}\kappa^{2}/(1-2g)$. Here $p$ ($\gamma$) is the channel noise
parameter and $\kappa=\tfrac12\sin2\theta$ fixes the input.}
\label{tab:fg}
\begin{ruledtabular}
\begin{tabular}{lcccc}
Channel & Parameter & $f$ & $g(\theta)$ & $s=1-2g$\\
\hline
Phase damping & $p$ & $1-p$ & $0$ & $1$\\
Global depolarizing & $p$ & $1-p$ & $p/4$ & $1-p/2$\\
Amplitude damping & $\gamma$ & $1-\gamma$ &
$\gamma(1-\gamma)\sin^{2}\theta$ & $1-2\gamma(1-\gamma)\sin^{2}\theta$\\
Local depolarizing & $p$ & $(1-p)^{2}$ & $p(2-p)/4$ & $1-p+p^{2}/2$\\
\end{tabular}
\end{ruledtabular}
\end{table*}

\begin{corollary}[Channel closed forms]
\label{cor:closed}
For the pure input with $\kappa=\tfrac12\sin2\theta$,
\begin{align}
F_{\rm phase}&=(1-p)^{2}\sin^{2}2\theta,\\
F_{\rm dep}&=\frac{(1-p)^{2}\sin^{2}2\theta}{1-p/2},\\
F_{\rm AD}&=\frac{(1-\gamma)^{2}\sin^{2}2\theta}
{1-2\gamma(1-\gamma)\sin^{2}\theta},\\
F_{\rm loc}&=\frac{(1-p)^{4}\sin^{2}2\theta}{1-p+p^{2}/2}.
\end{align}
\end{corollary}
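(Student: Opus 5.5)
The plan is to obtain each closed form by substituting the channel coordinates of Table~\ref{tab:fg} into the master formula of Theorem~\ref{thm:coordinate}. That formula is $F_{\phi}=4z^{2}/(1-2g)$ with $z=f\kappa$. For a pure input $\kappa=\tfrac12\sin2\theta$, so $4z^{2}=4f^{2}\kappa^{2}=f^{2}\sin^{2}2\theta$. The general expression therefore becomes
\begin{equation*}
F_{\phi}=\frac{f^{2}\sin^{2}2\theta}{s},\qquad s=1-2g.
\end{equation*}
Each line of the corollary then follows by reading off $f$ and $s$ from the corresponding row of the table.

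Carrying this out row by row:
\begin{itemize}
\item Phase damping has $f=1-p$ and $s=1$, which gives $(1-p)^{2}\sin^{2}2\theta$.
\item Global depolarization has $f=1-p$ and $s=1-p/2$.
\item Amplitude damping has $f=1-\gamma$ and $s=1-2\gamma(1-\gamma)\sin^{2}\theta$.
\item Local depolarization has $f=(1-p)^{2}$, so $f^{2}=(1-p)^{4}$. Its block weight is $s=1-2\cdot p(2-p)/4=1-p+p^{2}/2$, a one-line simplification.
\end{itemize}

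The only substantive point is that Theorem~\ref{thm:coordinate} actually applies to each channel. This requires that each channel, acting on the CNOT output $\cos\theta\ket{00}+\sin\theta\ket{11}$, returns an X state of the form \eqref{eq:generalX} with $u=v=g$, antidiagonal element $f\kappa e^{\pm i\phi}$, and the tabulated $g$. I would take this from the companion benchmark and the Kraus sets of Appendix~\ref{app:channels}, as the paper does. As a sanity check, I would verify the amplitude-damping row directly, since its $g$ is the least obvious.

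For amplitude damping, taken as independent damping on both qubits, the $\ket{11}\bra{11}$ population $\sin^{2}\theta$ is redistributed as follows:
\begin{itemize}
\item it stays in $\ket{11}$ with weight $(1-\gamma)^{2}$;
\item it moves to each of $\ket{01}$ and $\ket{10}$ with weight $\gamma(1-\gamma)$;
\item it moves to $\ket{00}$ with weight $\gamma^{2}$.
\end{itemize}
This gives $u=v=\gamma(1-\gamma)\sin^{2}\theta$. The $\ket{00}\bra{11}$ coherence is multiplied by $\sqrt{1-\gamma}^{\,2}=1-\gamma$, so $f=1-\gamma$.

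The phase factor is untouched in every case, because each channel is phase-covariant with respect to the generator $G$, so $\phi$ stays on the antidiagonal. I expect this applicability check to be the only step needing care. Once it is in place, the rest is direct substitution.
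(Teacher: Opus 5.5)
Your proposal is correct and follows the paper's route exactly: the corollary is the tabulated $(f,g)$ substituted into $F_\phi=4z^2/(1-2g)$ of Theorem~\ref{thm:coordinate}, using $4f^2\kappa^2=f^2\sin^2 2\theta$, and your Kraus check of the amplitude-damping row ($f=1-\gamma$, $u=v=\gamma(1-\gamma)\sin^2\theta$) is right. One minor imprecision: amplitude damping and local depolarizing are not covariant under $e^{-i\phi G}$ on general two-qubit states, only under local $Z$ rotations. The conclusion still holds because the input lies in $\mathrm{span}\{\ket{00},\ket{11}\}$, where $G$ acts as $\ket{1}\bra{1}_A$, and the outputs have no odd-sector coherence. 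So the phase appears only on the antidiagonal, as your direct computation already shows.
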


Phase damping ($g=0$) is the zero-penalty baseline: coherence is suppressed
but no population penalty is created, so there is no finite-noise ESD and the
separable-but-sensitive region is never entered. Global depolarizing is the
cleanest case of the separation, in which finite ESD occurs while the QFI stays
positive. Local depolarizing moves through related coordinates at a different
rate, and shares its death-point QFI with global depolarizing
(Sec.~\ref{sec:death}). Amplitude damping tests the scope of symmetric
leakage: the symmetric two-sided model keeps $u=v$, and one-sided damping is
deferred to the three-coordinate treatment of Sec.~\ref{sec:asym}. Full
landscapes $F(\theta,x)$ with the ESD boundaries overlaid appear in
Fig.~\ref{fig:landscapes}; the QFI is continuous and strictly positive across
every boundary.

\begin{figure*}[t]
\includegraphics[width=\textwidth]{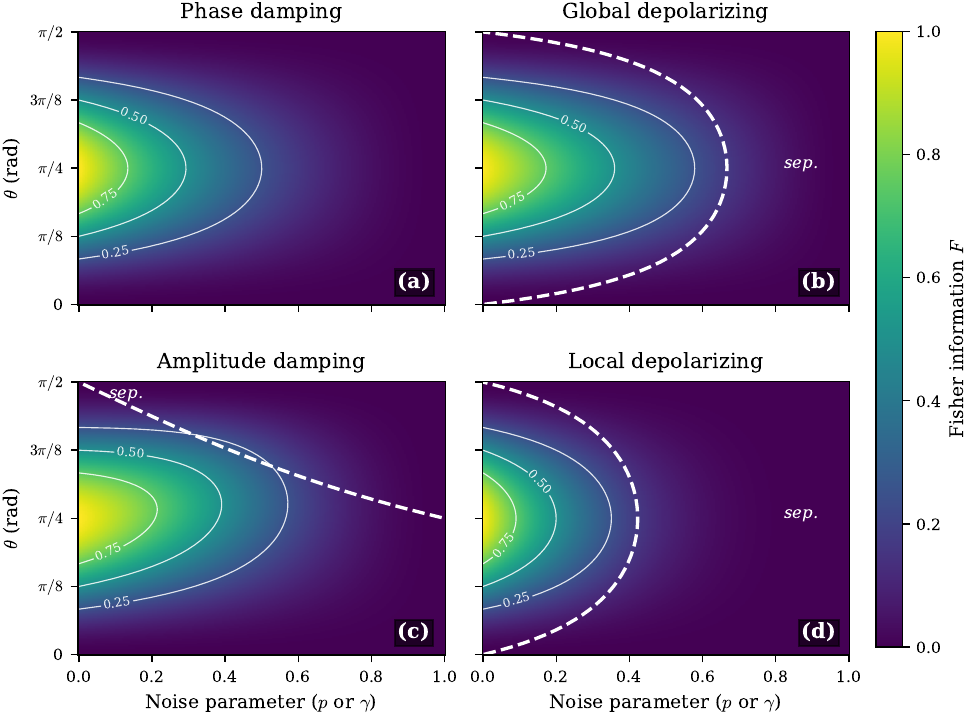}
\caption{QFI landscapes $F(\theta,\cdot)$ on a shared color scale with white
survival contours $F=0.25,0.5,0.75$: (a) phase damping, (b) global
depolarizing, (c) amplitude damping, (d) local depolarizing. White dashed
curves are the entanglement-death boundaries of the same states
\cite{Companion}, separable side marked ``sep.''; the QFI crosses every
boundary continuously and remains positive.}
\label{fig:landscapes}
\end{figure*}

\section{Death-point sensitivity and equivalence classes}
\label{sec:death}

The separation is sharpest at its boundary. At entanglement death the
symmetric coordinates satisfy $f_{\star}\kappa=g_{\star}$, and substituting
into \eqref{eq:masters} gives the residual sensitivity
\begin{equation}
F_{\phi}^{\star}=\frac{4g_{\star}^{2}}{1-2g_{\star}},
\label{eq:Fstar}
\end{equation}
a function of the death coordinate $g_{\star}$ alone. This turns the ESD
boundary, ordinarily just a separability threshold, into a metrological
benchmark and classifies channels within the symmetric-leakage family.

\begin{corollary}[Death-point equivalence]
\label{cor:equiv}
Because $h(g)=4g^{2}/(1-2g)$ is strictly increasing on $[0,\tfrac12)$, two
channels have equal residual QFI at entanglement death,
$F_{\phi,A}^{\star}=F_{\phi,B}^{\star}$, if and only if they share the death
coordinate, $g_{\star,A}=g_{\star,B}$.
\end{corollary}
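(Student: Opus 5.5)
The plan is to reduce the corollary to injectivity of a single real function. By the death-point formula \eqref{eq:Fstar}, obtained from Theorem~\ref{thm:coordinate} by substituting $z=f_\star\kappa=g_\star$, each channel in the symmetric-leakage family has residual QFI $F_\phi^\star=h(g_\star)$ with $h(g)=4g^2/(1-2g)$. The statement therefore amounts to the claim that $h(g_{\star,A})=h(g_{\star,B})$ holds if and only if $g_{\star,A}=g_{\star,B}$, for $g_{\star,A},g_{\star,B}\in[0,\tfrac12)$. I will first check that death coordinates lie in this domain. Theorem~\ref{thm:coordinate} presumes $s=1-2g>0$, as Proposition~\ref{prop:block} requires, and death occurs where $g_\star=f_\star\kappa\ge0$. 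Hence $g_\star\in[0,\tfrac12)$.

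The ``if'' direction is trivial: equal arguments give equal values of $h$. For ``only if'', I will show that $h$ is strictly increasing on $[0,\tfrac12)$, so that it is injective there. Differentiating gives
\begin{equation*}
h'(g)=\frac{8g(1-2g)+8g^2}{(1-2g)^2}=\frac{8g(1-g)}{(1-2g)^2},
\end{equation*}
which is strictly positive on $(0,\tfrac12)$ and vanishes only at the endpoint $g=0$. A function that is continuous on $[0,\tfrac12)$ and has positive derivative on the interior is strictly increasing on the whole half-open interval, by the mean value theorem. Alternatively, one can argue without calculus: for $0\le g_1<g_2<\tfrac12$ the numerator $4g^2$ strictly increases and the positive denominator $1-2g$ strictly decreases, so $h(g_1)<h(g_2)$. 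I would present this elementary argument, since it sidesteps the vanishing of $h'$ at $g=0$.

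The only point requiring care is the domain. One must ensure that both channels actually reach death inside the symmetric-leakage regime with $g_\star<\tfrac12$, since outside this interval $h$ changes sign and injectivity fails. One must also treat the degenerate case of a channel that never dies at finite noise, such as phase damping with $g\equiv0$. In that case the only ``death'' is at $z=g=0$, where $F_\phi^\star=0=h(0)$, and the equivalence still holds with $g_\star=0$. I would state this convention explicitly and then conclude.
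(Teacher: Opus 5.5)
Your proposal is correct and follows the paper's argument: the paper substitutes $f_\star\kappa=g_\star$ into Theorem~\ref{thm:coordinate} to obtain $F_\phi^\star=h(g_\star)$, and then concludes from the strict monotonicity, hence injectivity, of $h$ on $[0,\tfrac12)$, which it asserts without proof. Your derivative $h'(g)=8g(1-g)/(1-2g)^2$, the elementary numerator--denominator comparison, and your explicit treatment of the domain and of the $g_\star=0$ case supply details the paper leaves implicit.
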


\begin{corollary}[Depolarizing class]
\label{cor:onesixth}
Global and independent local depolarization both satisfy $g=(1-f)/4$
[from $p(2-p)=1-(1-p)^{2}$], hence reach death at the same coordinate
$g_{\star}=\kappa/(4\kappa+1)$ and share
\begin{equation}
F_{\phi}^{\star}=\frac{\sin^{2}2\theta}{(2\sin2\theta+1)(\sin2\theta+1)},
\end{equation}
equal to $\tfrac16$ at $\theta=\pi/4$. The two channels are not dynamically
identical, since they reach death at different noise strengths $p_{c}\neq
p_{c}^{\rm loc}$, but they carry the same residual sensitivity on the
boundary.
\end{corollary}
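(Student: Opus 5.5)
The proof reads the coordinates off Table~\ref{tab:fg}, checks a single linear relation between $f$ and $g$ that both channels satisfy, and then feeds that relation into the death-point formula \eqref{eq:Fstar}.

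\textbf{Step 1: a common line.} For global depolarizing, $f=1-p$ and $g=p/4$, so $g=(1-f)/4$ holds directly. For local depolarizing, $f=(1-p)^2$ and $g=p(2-p)/4$. The identity $p(2-p)=1-(1-p)^2$ gives $g=(1-f)/4$ again. Both trajectories therefore lie on the same line $g=(1-f)/4$ in the $(f,g)$ plane. They traverse it at different rates in $p$.

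\textbf{Step 2: the death point.} By Theorem~\ref{thm:separation}, death occurs where $f_\star\kappa=g_\star$. Substituting $f_\star=1-4g_\star$ gives $\kappa-4\kappa g_\star=g_\star$, so $g_\star=\kappa/(4\kappa+1)$. This intersection depends only on the line and on $\kappa$, not on the channel. Corollary~\ref{cor:equiv} then says the residual QFI is shared. I will still compute it explicitly from \eqref{eq:Fstar}.

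\textbf{Step 3: the residual QFI.} First, $1-2g_\star=(2\kappa+1)/(4\kappa+1)$. Then
\begin{equation*}
F_\phi^\star=\frac{4\kappa^2}{(4\kappa+1)(2\kappa+1)}.
\end{equation*}
Setting $\kappa=\tfrac12\sin2\theta$ gives $\sin^2 2\theta/[(2\sin2\theta+1)(\sin2\theta+1)]$. At $\theta=\pi/4$ this is $1/(3\cdot 2)=\tfrac16$.

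\textbf{Step 4: different critical strengths.} For global depolarizing, $1-p_c=4g_\star$, so $p_c=1/(4\kappa+1)$. For local depolarizing, $(1-p_c^{\rm loc})^2=4g_\star$, so $p_c^{\rm loc}=1-\sqrt{4\kappa/(4\kappa+1)}$. Equality would require $1-p=(1-p)^2$, which forces $p\in\{0,1\}$. Neither endpoint is a death point for $\kappa>0$: $p=0$ gives $g=0<f\kappa$, and $p=1$ gives $f=0<g$. Hence $p_c\neq p_c^{\rm loc}$.

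The only step needing any care is Step 2. One must confirm that $g_\star<\tfrac12$ and that death is actually reached within $p\in[0,1]$ for each channel. Along the line, $f\kappa-g$ decreases continuously from $\kappa$ at $f=1$ to $-\tfrac14$ at $f=0$. Both channels sweep $f$ monotonically from $1$ to $0$ as $p$ runs over $[0,1]$, so the death point is crossed exactly once.
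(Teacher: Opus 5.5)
Your route is the paper's own (Appendix~\ref{app:residual}): substitute $g=(1-f)/4$ into $f_\star\kappa=g_\star$ to get $g_\star=\kappa/(4\kappa+1)$ and $s_\star=(2\kappa+1)/(4\kappa+1)$, then evaluate \eqref{eq:Fstar}. Steps 1--3 and your single-crossing argument are correct.

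There is a slip in Step 4. For global depolarizing, $g=p/4$, so death gives $p_c=4g_\star$. Equivalently $1-p_c=f_\star=1-4g_\star$, not $1-p_c=4g_\star$. For local depolarizing, $(1-p)^2=1-4g=f$, so $(1-p_c^{\rm loc})^2=f_\star=1/(4\kappa+1)$. The correct thresholds are therefore $p_c=4\kappa/(4\kappa+1)$ and $p_c^{\rm loc}=1-1/\sqrt{4\kappa+1}$. With $s_2=\sin2\theta$ these are the paper's $2s_2/(2s_2+1)$ and $1-1/\sqrt{2s_2+1}$. Your values $1/(4\kappa+1)$ and $1-\sqrt{4\kappa/(4\kappa+1)}$ both land at $f=1-f_\star$, which is not a death point. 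For example, at $\kappa=\tfrac12$ your $p_c=\tfrac13$ gives $f\kappa=\tfrac13\neq g=\tfrac1{12}$.

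The conclusion $p_c\neq p_c^{\rm loc}$ is unaffected. Your equality argument uses only that both channels must reach the same $f_\star$, so a common critical $p$ would force $1-p=(1-p)^2$, and you correctly exclude $p\in\{0,1\}$ for $\kappa>0$. That argument is a small improvement on the paper, which just quotes the two thresholds from the companion work and observes that they differ. Fix the explicit formulas, or drop them, since the inequality does not need them.
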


The value $\tfrac16$ is structural, not accidental: it is $h$ evaluated at
the death coordinate common to the depolarizing class at maximal input
coherence. Fig.~\ref{fig:death}(a) shows both channels falling on a single
residual curve as a function of $\theta$.

\begin{figure*}[t]
\includegraphics[width=\textwidth]{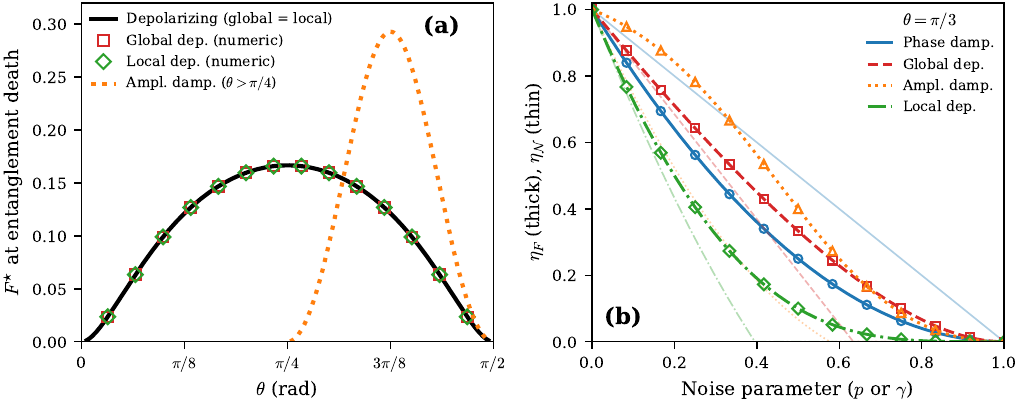}
\caption{(a)~Residual phase sensitivity $F_{\phi}^{\star}$ at entanglement
death versus input angle $\theta$. Global and independent local
depolarization (markers, numerically evaluated) lie on the common analytic
curve (solid), coinciding because they share the death coordinate
(Corollary~\ref{cor:equiv}); the value peaks at $F_{\phi}^{\star}=\tfrac16$
for the maximally coherent input $\theta=\pi/4$
(Corollary~\ref{cor:onesixth}). Amplitude damping (dotted) reaches death
only for $\theta>\pi/4$ and follows a different residual. (b)~QFI and
negativity survival fractions $\eta_{F},\eta_{\mathcal N}$ versus noise
parameter at $\theta=\pi/3$, showing that the QFI (thick) is retained past
the point where the negativity (thin) vanishes.}
\label{fig:death}
\end{figure*}

\section{Asymmetric leakage: the boundary of the geometry}
\label{sec:asym}

The two-coordinate description is exact for symmetric leakage and fails
otherwise, and the manner of its failure is itself a result. Write the
phase-block weight and the odd-sector imbalance
\begin{equation}
s=a+b=1-u-v,\qquad \delta=u-v.
\end{equation}
By Proposition~\ref{prop:block}, $F_{\phi}=4z^{2}/s$ depends only on $(z,s)$
and is \emph{invariant} under $\delta$. By Proposition~\ref{prop:neg},
$\Ncal=\tfrac12[\sqrt{\delta^{2}+4z^{2}}-(u+v)]$ is \emph{nondecreasing} in
$|\delta|$ at fixed $z$ and $u+v$.

\begin{proposition}[Symmetric leakage is extremal for entanglement]
\label{prop:extremal}
At fixed phase-bearing coherence $z$ and fixed total leakage $u+v$, the
negativity is minimized at $\delta=0$ and increases monotonically with
$|\delta|$, while the QFI is constant. Symmetric population transfer is
therefore the most destructive arrangement for entanglement at fixed
coherence and leakage.
\end{proposition}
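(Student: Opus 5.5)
The proof reduces to reading off two earlier formulas, with the two totals $z$ and $t\equiv u+v$ held fixed and only $\delta=u-v$ allowed to vary.

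\emph{QFI.} By Proposition~\ref{prop:block}, $F_\phi=4z^2/s$ with $s=1-u-v=1-t$. This involves only $z$ and $t$, so it is constant in $\delta$. This part is immediate.

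\emph{Negativity.} By Proposition~\ref{prop:neg}, writing $u=(t+\delta)/2$ and $v=(t-\delta)/2$, the candidate eigenvalue is
\begin{equation*}
\Lambda(\delta)=\tfrac12\bigl[\sqrt{\delta^2+4z^2}-t\bigr],
\end{equation*}
so that $\Ncal=\max\{0,\Lambda(\delta)\}$. The map $\delta\mapsto\sqrt{\delta^2+4z^2}$ is even and strictly increasing in $|\delta|$ whenever $z>0$, since its derivative $\delta/\sqrt{\delta^2+4z^2}$ has the sign of $\delta$. It remains nondecreasing when $z=0$, where it equals $|\delta|$. Hence $\Lambda$ is minimized at $\delta=0$ and nondecreasing in $|\delta|$.

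Composing with $x\mapsto\max\{0,x\}$ preserves this monotonicity. Therefore $\Ncal$ is minimized at $\delta=0$ and nondecreasing in $|\delta|$.

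\emph{Admissible range and the word ``monotonically''.} The main subtlety is the domain. Positivity requires $u,v\ge0$, so $|\delta|\le t$. It also requires $z^2\le ab$, with $a+b=1-t$; this constrains only the split of $a$ and $b$, not $\delta$. So for every fixed admissible $(z,t)$, the whole interval $|\delta|\le t$ is admissible: choose $a=b=(1-t)/2$, which is feasible whenever $z\le(1-t)/2$.

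One must also be careful about strictness. On the region where $\Lambda\le0$, $\Ncal$ is constant at zero. The proposition's ``increases monotonically'' should therefore be read as nondecreasing, and strict wherever $\Ncal>0$. I would add a remark to this effect. In particular, the minimum at $\delta=0$ is $\max\{0,z-t/2\}$, which reproduces Theorem~\ref{thm:coordinate} with $g=t/2$.

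\emph{Conclusion.} At fixed coherence $z$ and fixed leakage $t$, symmetric transfer ($\delta=0$) gives the least negativity while leaving $F_\phi$ unchanged. It is thus the most destructive arrangement for entanglement.
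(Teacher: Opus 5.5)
Your proof is correct and is essentially the paper's argument: the QFI $4z^{2}/s$ with $s=1-(u+v)$ has no $\delta$-dependence, and $\Ncal$ depends on $\delta$ only through the even function $\sqrt{\delta^{2}+4z^{2}}$, which increases with $|\delta|$. Your two additional checks are correct refinements that the paper's one-line proof leaves implicit: that the full range $|\delta|\le u+v$ is admissible at fixed $(z,u+v)$, and that clipping by $\max\{0,\cdot\}$ makes $\Ncal$ only nondecreasing in $|\delta|$, strictly increasing only where $\Ncal>0$.
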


\begin{proof}
$\Ncal$ depends on $\delta$ only through $\sqrt{\delta^{2}+4z^{2}}$, which is
even in $\delta$ and strictly increasing in $|\delta|$; $F_{\phi}=4z^{2}/s$
has no $\delta$ dependence.
\end{proof}

This delimits the geometry precisely: two coordinates $(z,s)$ suffice for the
QFI everywhere on the X-state family, whereas the negativity requires a third
coordinate $\delta$ once leakage is asymmetric. The common description of
Sec.~\ref{sec:separation} is thus complete for symmetric leakage and exactly
characterized where it stops, which is a domain statement, not a caveat
(Fig.~\ref{fig:asym}). Physically, one-sided amplitude damping is the natural
realization: it breaks $u=v$, raising the negativity above the symmetric
value at the same coherence while leaving the phase QFI untouched.

\begin{figure}[t]
\includegraphics[width=\columnwidth]{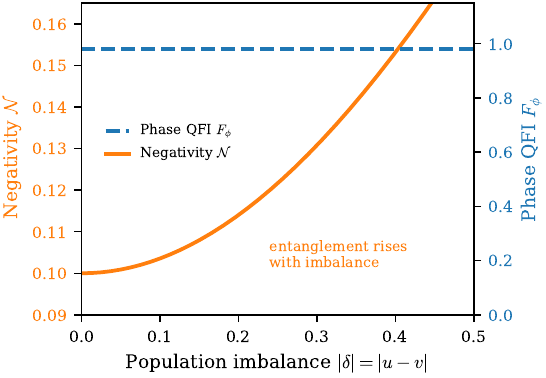}
\caption{Asymmetric leakage at fixed phase-bearing coherence $z=0.35$ and
fixed total leakage $u+v=0.5$. The phase QFI (blue dashed, right axis) is
constant, invariant under the imbalance $\delta=u-v$, while the negativity
(orange, left axis) rises monotonically with $|\delta|$
(Proposition~\ref{prop:extremal}). The QFI is a two-coordinate quantity
$(z,s)$; the negativity requires the third coordinate $\delta$, which is
invisible to phase estimation.}
\label{fig:asym}
\end{figure}

\section{Operational meaning, measurement, and direct probe}
\label{sec:operational}

\emph{Why the QFI.} A nonzero $F_{\phi}$ means the output distribution
depends on $\phi$, and the quantum Cram\'er--Rao bound
$\mathrm{Var}(\hat\phi)\geq1/(MF_{\phi})$ certifies that this dependence is
estimable: from $M$ copies the phase can be resolved to
$O(1/\sqrt{MF_{\phi}})$. Positive QFI beyond ESD therefore means a
\emph{separable} output still supports estimation of the encoded phase. We
phrase the claim narrowly: output entanglement is not necessary for retaining
nonzero phase information \emph{in this task}, and we do not generalize to all
metrology.

\emph{Attaining measurement.} The bound is attained by a projective
measurement in the eigenbasis of the symmetric logarithmic derivative, which
for the even-parity block reduces to an interference measurement between
$\ket{00}$ and $\ket{11}$; the optimal basis depends on the working point
$\phi_{0}$ through $(\ket{00}\pm ie^{-i\phi_{0}}\ket{11})/\sqrt2$, completed
by the odd-sector populations, so it is realized by local estimation around a
consistent estimate with adaptive updating (Appendix~\ref{app:block}). This
is the decoding-CNOT-plus-equatorial-readout of the companion protocol.

\emph{Direct single-qubit probe.} If the goal is only to estimate the phase
of qubit $A$, why convert? The direct probe has $F_{\rm direct}=4z_{A}^{2}$
[the $s=1$ case of Proposition~\ref{prop:block}]. Under matched noise
exposure, the converted-to-direct ratio $R_{F}=F_{\rm conv}/F_{\rm direct}$
obeys: for dephasing at equal storage time, $R_{F}=1-p<1$ (the two-qubit
coherence decays twice as fast); for a depolarizing gate error,
$R_{F}=(1-p)^{2}/(1-p/2)<1$; and for amplitude damping,
$R_{F}=(1-\gamma)/[1-2\gamma(1-\gamma)\sin^{2}\theta]$, exceeding unity only
in the window $\sin^{2}\theta>1/[2(1-\gamma)]$, a \emph{relative} robustness
that occurs where the absolute sensitivity is itself small and that ignores
the ancilla, gate, and measurement resources conversion consumes
(Fig.~\ref{fig:direct}). The implication is limited: conversion is
not a sensitivity-enhancement strategy. Its role is to provide a controlled
setting in which the separation between phase sensitivity and entanglement is
exact. The results are information-retention benchmarks, not a precision
advantage.

\begin{figure}[t]
\includegraphics[width=\columnwidth]{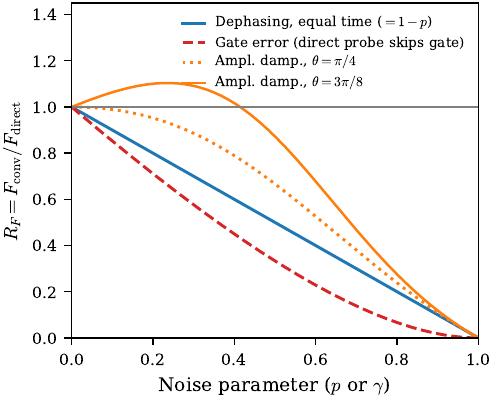}
\caption{Converted-to-direct QFI ratio under matched exposure. Dephasing
(blue) and gate error (red dashed) make conversion strictly costly;
amplitude damping exceeds unity only in the window
$\sin^{2}\theta>1/[2(1-\gamma)]$ (orange, at $\theta=\pi/4$ and $3\pi/8$), a
relative robustness that is not a metrological advantage after resource
accounting.}
\label{fig:direct}
\end{figure}

\section{During-gate benchmark}
\label{sec:during}

The analysis so far idealizes the CNOT as instantaneous and noise-free,
with noise acting only afterward. To bound this idealization we compare
against a during-gate model in which the entangling Hamiltonian and a fixed
set of Lindblad operators act simultaneously (Appendix~\ref{app:during}
specifies the generator, rates, gate duration, and the matched-exposure
prescription: equal integrated rate to the post-gate channel parameter). For
the Hamiltonian, initial state, matching prescription, and Lindblad
generators considered, the post-gate channel gives a \emph{lower} benchmark
for both the final QFI and the negativity: during part of the gate the phase
still resides in less fragile single-qubit coherence, whereas the post-gate
construction exposes the fully formed two-qubit coherence, which decays
fastest, to the entire noise budget. We claim no universal bound, only that, in the models tested, the
post-gate formulas are conservative.

\section{Discussion and limitations}
\label{sec:discussion}

\emph{Why the QFI survives ESD.} Entanglement death means the phase-bearing
coherence no longer exceeds the population penalty in the partial-transpose
spectrum; it does \emph{not} mean the coherence is zero. The QFI, blind to the
population penalty except through the mild block-weight renormalization,
remains positive as long as any coherence survives.

\emph{What the common geometry means.} Under symmetric leakage both resources
are computable from the shared coordinates $(z,g)$, but they are not the same
resource: one is a thresholded difference, the other a smooth ratio. The
shared coordinates explain why the two can be compared exactly; the different
operations explain why they fail at different points.

\emph{Operational implication.} Entanglement verification is not sufficient to
diagnose phase-information loss in this protocol. A certificate that the
output has become separable says nothing about whether $\phi$ remains
estimable, and by Theorem~\ref{thm:separation} there is an entire region
where it does.

\emph{Scope and nonclaims.} The analysis is limited to single-parameter phase
estimation on the two-qubit phase-encoded X-state family under the specified
channels, with a fixed bounded generator (so $F_{\phi}\leq1$ and no scaling is
at stake); multiparameter estimation and estimation of the noise parameters
are not treated. We claim no novelty for block-QFI evaluation, for coherence
supporting phase estimation, for nonzero QFI on separable states, or for the
negativity geometry of Ref.~\cite{Companion}; and no metrological advantage,
the direct probe being superior under matched exposure. The during-gate
conclusion is specific to the tested generators and matching prescription.

The content specific to this work is: the resource-separation theorem and the exact
separable-but-phase-sensitive region (Theorem~\ref{thm:separation}); the
death-point sensitivity \eqref{eq:Fstar} and its equivalence classification
(Corollaries~\ref{cor:equiv}--\ref{cor:onesixth}), including the structural
$\tfrac16$; the extremality of symmetric leakage for entanglement and the
resulting exact boundary of the two-coordinate description
(Proposition~\ref{prop:extremal}); and the attaining measurement with the
matched direct-probe comparison.

\section{Conclusion}
\label{sec:conclusion}

We summarize the main points. First, post-gate noise can remove the output
entanglement before it removes the encoded phase information: there is an
exact region of state space in which the output is separable yet
phase-sensitive. Second, this separation is not a coincidence of the chosen
channels but a consequence of two functional forms reading the same
coordinates: a thresholded difference for the negativity, a smooth
normalized square for the QFI. Third, the residual sensitivity at
entanglement death is a function of the death coordinate alone, so
channels dying at the same coordinate share their residual QFI; global and
independent local depolarization form one such class with $\tfrac16$ of the
noiseless sensitivity surviving at maximal input coherence. Fourth,
asymmetric population transfer adds a third coordinate that changes the
entanglement while leaving the QFI invariant, which marks the domain where
the two-coordinate description holds. Taken together, these results show that
entanglement death is not phase-information death, and quantify how much
sensitivity survives and where the two-coordinate description applies.

\begin{acknowledgments}
The authors acknowledge support from the Qatar Center for Quantum Computing
at Hamad Bin Khalifa University.
\end{acknowledgments}

\section*{Data availability}
No data were created or analyzed in this study. All results are analytic
and are reproducible from the expressions in the paper together with the
numerical-verification procedure described in Appendix~\ref{app:numerics}.
\appendix

\section{Block QFI, SLD, and attaining measurements}
\label{app:block}

\subsection{Proof of Proposition~\ref{prop:block}}

The state \eqref{eq:generalX} is block diagonal: the even block
$\rho_{2}=\bigl(\begin{smallmatrix}a & ze^{-i\phi}\\ ze^{i\phi} &
b\end{smallmatrix}\bigr)$ with eigenvalues
$\lambda_{\pm}=\tfrac12(s\pm\Delta)$,
$\Delta=\sqrt{(a-b)^{2}+4z^{2}}$, $s=a+b$, plus eigenvalues $u,v$ on
$\ket{01},\ket{10}$. For the unitarily generated family of Sec.~\ref{sec:setup} the QFI is
\begin{equation}
F_{\phi}=2\sum_{\lambda_{i}+\lambda_{j}>0}
\frac{(\lambda_{i}-\lambda_{j})^{2}}{\lambda_{i}+\lambda_{j}}
|\bra{i}G\ket{j}|^{2}.
\label{eq:qfidef}
\end{equation}
Since $G\ket{01}=G\ket{10}=0$, all terms involving the odd sector vanish,
as do diagonal terms. Within the block, write
$\rho_{2}=\tfrac12(s\openone+\bm w\cdot\bm\sigma)$ with
$w_{\perp}=2z$, $|\bm w|=\Delta$, and
$G=\tfrac12(\openone-\sigma_{z})$ there; then
$|\bra{+}G\ket{-}|^{2}=\tfrac14\sin^{2}\Theta=z^{2}/\Delta^{2}$ with
$\Theta$ the polar angle of $\bm w$, and \eqref{eq:qfidef} gives
$F_{\phi}=4(\Delta^{2}/s)(z^{2}/\Delta^{2})=4z^{2}/s$, independent of
$u,v$ individually. Rank-deficient cases follow by continuity with the
excluded-pair convention. For a pure even-block state this reduces to
$4\,\mathrm{Var}(G)$, consistent with the noiseless value $F_{0}=\sin^{2}2\theta$.

\subsection{Symmetric logarithmic derivative}

$L$ solves $\partial_{\phi}\rho=-i[G,\rho]=\tfrac12(L\rho+\rho L)$; one
may take $L=0$ on the odd sector and, in the block eigenbasis,
$L_{+-}=2(\lambda_{+}-\lambda_{-})\bra{+}(-iG)\ket{-}/(\lambda_{+}
+\lambda_{-})$, whence $F_{\phi}=\mathrm{Tr}(\rho L^{2})=4z^{2}/s$. The
eigenbasis of $L$ defines the optimal projective measurement; at the
working point it is the equatorial block basis used below, and it is
channel independent because the block coherence phase is channel
independent.

\subsection{Attaining measurements}

\emph{Parity-resolving projection.} The four-outcome measurement
$\Pi_{\pm}=\ket{\pm_{y}}\bra{\pm_{y}}$ with
$\ket{\pm_{y}}=(\ket{00}\pm ie^{-i\phi_{0}}\ket{11})/\sqrt2$, plus
$\Pi_{01},\Pi_{10}$, has outcome probabilities
$p_{\pm}=s/2\pm z\sin(\phi-\phi_{0})$ and $\phi$-independent $u,v$; at
$\phi=\phi_{0}$ the classical Fisher information is
$2z^{2}/(s/2)=4z^{2}/s=F_{\phi}$.

\emph{Decoding circuit.} The CNOT maps
$\ket{00}\!\to\!\ket{00}$, $\ket{11}\!\to\!\ket{10}$,
$\ket{01}\!\to\!\ket{01}$, $\ket{10}\!\to\!\ket{11}$: the even block
becomes a state of qubit $A$ with $B$ heralding $0$ (probability $s$), the
odd sector heralds $1$. Conditioned on the herald, qubit $A$ carries
coherence $z/s$; the equatorial single-qubit measurement then yields
classical Fisher information $4(z/s)^{2}$ per heralded copy, and
$s\cdot4(z/s)^{2}=4z^{2}/s$ per copy overall. Saturation is local in
$\phi$; adaptive updating of $\phi_{0}$ applies as usual.

\section{Residual formulas}
\label{app:residual}

For the depolarizing class, $g=(1-f)/4$ and $g_{\star}=f_{\star}\kappa$
give $f_{\star}=1/(4\kappa+1)$, $g_{\star}=\kappa/(4\kappa+1)$,
$s_{\star}=(2\kappa+1)/(4\kappa+1)$, hence
\begin{equation}
F^{\star}=\frac{4(f_{\star}\kappa)^{2}}{s_{\star}}
=\frac{4\kappa^{2}}{(4\kappa+1)(2\kappa+1)}
=\frac{s_{2}^{2}}{(2s_{2}+1)(s_{2}+1)},
\end{equation}
with $s_{2}=\sin2\theta=2\kappa$; at $\theta=\pi/4$,
$F^{\star}=\eta^{\star}=\tfrac16$. The thresholds
$p_{c}=2s_{2}/(2s_{2}+1)$ and $p_{c}^{\rm loc}=1-1/\sqrt{2s_{2}+1}$
\cite{Companion} differ; the residual is a property of the common point.
For amplitude damping at $\gamma_{c}=\cot\theta$ ($\theta>\pi/4$):
$f\kappa=\cos\theta(\sin\theta-\cos\theta)
=\tfrac12(s_{2}-1-\cos2\theta)$ and $s=1-2f\kappa$, giving
the residual \eqref{eq:Fstar}, which vanishes at both endpoints
$\theta\to\pi/4^{+}$ and $\theta\to\pi/2$ and is maximal between them
[Fig.~\ref{fig:death}].

\section{Fingerprint slopes at general input angle}
\label{app:slopes}

From $\eta_{F}=f^{2}/(1-2g)$ and $\eta_{\Ncal}=f-g/\kappa$ (before
death): $\eta_{F}'(0)=2f'(0)+2g'(0)$ and
$\eta_{\Ncal}'(0)=f'(0)-g'(0)/\kappa$. Table~\ref{tab:fg} gives
\begin{center}
\begin{tabular}{lcc}
\hline\hline
Channel & $\eta_{\Ncal}'(0)$ & $\eta_{F}'(0)$\\
\hline
Phase damping & $-1$ & $-2$\\
Global depolarizing & $-1-\dfrac{1}{2\sin2\theta}$ & $-\dfrac32$\\[6pt]
Amplitude damping & $-(1+\tan\theta)$ & $-2\cos^{2}\theta$\\[2pt]
Local depolarizing & $-2-\dfrac{1}{\sin2\theta}$ & $-3$\\
\hline\hline
\end{tabular}
\end{center}
The two-sector degeneracy of amplitude damping and global depolarizing
requires $-(1+\tan\theta)=-1-1/(2\sin2\theta)$, i.e.,
$4\sin^{2}\theta=1$, and $-2\cos^{2}\theta=-\tfrac32$, i.e.,
$\cos^{2}\theta=\tfrac34$: both give $\theta=\pi/6$, with common pair
$(-1-1/\sqrt3,\,-\tfrac32)$. At $\theta=\pi/4$ ($\sin2\theta=1$) the table
gives the slope pairs $(\eta_{\mathcal N}',\eta_{F}')=(-1,-2)$,
$(-\tfrac32,-\tfrac32)$, $(-2,-1)$, and $(-3,-3)$ for the four channels.

\section{Channels, asymmetric entries, and physical mappings}
\label{app:channels}

\emph{Symmetric channels.} Phase damping multiplies the $00$--$11$
coherence by $1-p$; global depolarizing is
$\rho\to(1-p)\rho+p\,\openone/4$; amplitude damping applies
$K_{0}=\mathrm{diag}(1,\sqrt{1-\gamma})$,
$K_{1}=\sqrt{\gamma}\ket{0}\bra{1}$ on each qubit; local depolarizing
applies the single-qubit depolarizing channel with parameter $p$ on each
qubit. The resulting $(f,g)$ pairs are those of Table~\ref{tab:fg}
\cite{Companion}.

\emph{Asymmetric amplitude damping.} With strengths
$\gamma_{A},\gamma_{B}$, the input
$\cos\theta\ket{00}+e^{i\phi}\sin\theta\ket{11}$ maps to the X state with
entries of Sec.~\ref{sec:asym} and
$b=(1-\gamma_{A})(1-\gamma_{B})\sin^{2}\theta$,
$a=1-b-u-v$; no odd-sector coherence is generated, and all entries were
verified against the Kraus map to machine precision.

\emph{Generalized amplitude damping.} With Kraus operators
$\sqrt{1-\bar N}\{K_{0},K_{1}\}$ and
$\sqrt{\bar N}\{K_{0}^{T},K_{1}^{T}\}$ per qubit, transitions occur with
probabilities $\gamma(1-\bar N)$ ($1\to0$) and $\gamma\bar N$
($0\to1$); the coherence factor per qubit remains $\sqrt{1-\gamma}$,
independent of $\bar N$. Applied to both qubits the output remains a
symmetric X state with $f=1-\gamma$ and $g$ the generalized-damping penalty.

\emph{Physical mappings.} $p=1-e^{-2t/T_{\varphi}}$ (each qubit dephasing
at rate $1/T_{\varphi}$), $\gamma=1-e^{-t/T_{1}}$, and, for a gate of
average fidelity $F_{\rm avg}$ \cite{Magesan2011} modeled by global depolarization,
$p=\tfrac43(1-F_{\rm avg})$; a gate error
$\varepsilon_{\rm gate}=0.5\%$ gives $p\approx6.7\times10^{-3}$ and a QFI
reduction $\approx\tfrac32p\approx1\%$. See the caveat of
Sec.~\ref{sec:operational}.

\section{During-gate model}
\label{app:during}

The during-gate dynamics is
$\dot\rho=-i[H,\rho]+\sum_{j}\Gamma_{j}\mathcal{D}[L_{j}]\rho$ with
$H=[\pi/(2t_{g})]\ket{1}\bra{1}_{A}\otimes(\openone-X_{B})$, which
generates the CNOT at $t=t_{g}$, and
$\mathcal{D}[L]\rho=L\rho L^{\dagger}-\tfrac12\{L^{\dagger}L,\rho\}$. For
dephasing, $L=\sqrt{1/(2T_{\varphi})}\,\sigma_{z}$ on each qubit (so a
single-qubit coherence decays as $e^{-t/T_{\varphi}}$); for relaxation,
$L=\sqrt{1/T_{1}}\,\sigma_{-}$ on each qubit. The comparison partner is
the ideal CNOT followed by the corresponding channel of
Table~\ref{tab:fg} with $p=1-e^{-2t_{g}/T_{\varphi}}$ or
$\gamma=1-e^{-t_{g}/T_{1}}$. The master equation is integrated by
fourth-order Runge--Kutta; the QFI of the output family is computed from
$\rho(\phi)$ and $\partial_{\phi}\rho$ via the spectral representation of
the SLD, and the negativity from the partial transpose. Representative
values at $\theta=\pi/4$: for $t_{g}/T=0.2$, dephasing gives
$F=0.570$ (during) versus $0.449$ (post) and $\Ncal=0.360$ versus
$0.335$; relaxation gives $F=0.803$ versus $0.787$ and $\Ncal=0.397$
versus $0.335$; the largest non-X element and the imbalance $|\delta|$
remain below $6\times10^{-2}$ for $t_{g}/T\leq0.6$.

\section{Numerical verification}
\label{app:numerics}

All analytical expressions of the paper, namely the general formulas
\eqref{eq:blockqfi} and \eqref{eq:generalN} over randomly sampled states
including $u\neq v$, the closed forms of Corollary~\ref{cor:closed} on a
$121\times101$ grid per channel, the asymmetric-leakage relations of Sec.~\ref{sec:asym},
the generalized-damping coordinates including the exact
symmetry $u=v$ and preservation of the X form, the residuals
\eqref{eq:Fstar}, the channel closed forms, and the
single-qubit benchmark $F_{\rm direct}=4z_A^2$, were independently checked
against direct density-matrix calculations, with discrepancies at machine
precision throughout; the verification scripts accompany the manuscript.

\end{document}